\newcommand{\bmx}{\bm{x}}
\newcommand{\bmz}{\bm{z}}
\renewcommand{\tilde}{\widetilde}
\renewcommand{\hat}{\widehat}
\begin{document}
\title{Biased Pareto Optimization for Subset Selection \\with Dynamic Cost Constraints\thanks{This work was supported by the National Science and Technology Major Project (2022ZD0116600) and National Science Foundation of China (62276124). Chao Qian is the corresponding author.}}
%
%

\author{Dan-Xuan Liu \orcidID{0000-0002-7076-823X} \and Chao Qian\orcidID{0000-0001-6011-2512} }
\authorrunning{Dan-Xuan Liu and Chao Qian}
%
\institute{National Key Laboratory for Novel Software Technology, Nanjing University, China \\ School of Artificial Intelligence, Nanjing University, China \\
\email{\{liudx,qianc\}@lamda.nju.edu.cn}}

\maketitle              
\begin{abstract}
Subset selection with cost constraints aims to select a subset from a ground set to maximize a monotone objective function without exceeding a given budget, which has various applications such as influence maximization and maximum coverage. In real-world scenarios, the budget, representing available resources, may change over time, which requires that algorithms must adapt quickly to new budgets. However, in this dynamic environment, previous algorithms either lack theoretical guarantees or require a long running time. The state-of-the-art algorithm, POMC, is a Pareto optimization approach designed for static problems, lacking consideration for dynamic problems. In this paper, we propose BPODC, enhancing POMC with biased selection and warm-up strategies tailored for dynamic environments. We focus on the ability of BPODC to leverage existing computational results while adapting to budget changes. We prove that BPODC can maintain the best known $(\alpha_f/2)(1-e^{-\alpha_f})$-approximation guarantee when the budget changes. Experiments on influence maximization and maximum coverage show that BPODC adapts more effectively and rapidly to budget changes, with a running time that is less than that of the static greedy algorithm.

\keywords{Subset selection \and Dynamic cost constraints \and Pareto optimization \and Multi-objective evolutionary algorithms.}
\end{abstract}
\section{Introduction}\label{introduction}

The subset selection problem is a general NP-hard problem, which has a wide range of applications, such as influence maximization~\cite{IM-application}, maximum coverage~\cite{MC-application} and sensor placement~\cite{SP-application}, to name a few. The goal is to select a subset $X$ from a ground set $V$ to optimize a given function $f$, subject to a cost constraint. Specifically, the cost of the subset $X$ must not exceed a given budget $B$. This can be formally expressed as follows:
\begin{equation}\label{eq-general-problem}
\mathop{\arg\max}\limits_{X \subseteq V}  f(X) \quad \text{s.t.} \quad c(X) \leq B,
\end{equation}
where the objective function $f: 2^V \!\rightarrow\! \mathbb{R}$ and the cost function $c: 2^V \!\rightarrow\! \mathbb{R}$ are both monotone, meaning they do not decrease with the addition of elements to the set $X$. However, these functions are not necessarily submodular; a set function $f$ is submodular if $\forall X \subseteq Y, v \notin Y: f(X \cup \{v\})\!-\!f(X) \geq f(Y \cup \{v\})\! -\! f(Y),$ implying the diminishing returns property~\cite{nemhauser1978analysis}. We will introduce two applications of this problem in the next section. For the general constraint presented in Eq.~(\ref{eq-general-problem}), the Generalized Greedy Algorithm (GGA) iteratively selects an item with the largest ratio of marginal gain on $f$ and cost $c$, achieving the best known $(\alpha_f/2)(1-e^{-\alpha_f})$-approximation ratio~\cite{POMC,GGA}, where $\alpha_f$ measures the degree to which $f$ is nearly submodular.

In real-world scenarios, the resource budget $B$ in Eq.~(\ref{eq-general-problem}) for subset selection problems may vary over time. For example, in influence maximization, the market investment budget changes based on company outcomes and strategies, leading to a dynamic cost constraint. After each change of the budget $B$, it is feasible to treat the problem as a static problem with the new budget, and run static algorithms (e.g., GGA) from scratch, which, however, may lead to a long running time. When problem constraints change frequently, static algorithms might not be able to provide new solutions in time. For subset selection with dynamic cost constraints, Roostapour \textit{et al.}~\cite{aij22Roostapour} proposed an Adaptive Generalized Greedy Algorithm named AGGA, adjusting the current solution to fit the new budget. If the budget is reduced, it removes items with smallest ratio of the marginal gain on $f$ and cost $c$. Conversely, if the budget increases, it adds items like the static GGA. However, AGGA cannot maintain an $(\alpha_f/2)(1-e^{-\alpha_f})$-approximation for the new budget and may even perform arbitrarily poorly~\cite{aij22Roostapour}. 

Unlike memoryless deterministic greedy algorithms, Evolutionary Algorithms (EAs) are capable of leveraging existing computational results while adapting to changes in the budget $B$. EAs can continually search for new solutions by using the genetic information from the parent individuals in the population. Qian \textit{et al.}~\cite{POMC} proposed a Pareto Optimization approach for maximizing a Monotone function with a monotone Cost constraint, called POMC~\cite{ecj15submodular,qian2015subset}. It reformulates the original subset selection problem as a bi-objective optimization problem, which maximizes the objective $f$ and minimizes the cost $c$ simultaneously, and then employs a multi-objective EA to solve it. For the static setting of a fixed budget $B$, POMC achieves the best known $(\alpha_f/2)(1-e^{-\alpha_f})$-approximation ratio using at most $O(nBP_{max}/\delta_{\hat{c}})$ expected running time\footnote{The expected running time refers to the expected number of evaluations of the objective function, as evaluations are usually the most expensive part of the process.}, where $n$ is the size of the ground set $V$, $P_{max}$ is the largest size of population during the run of POMC, $\hat{c}$ is an approximation of the cost function $c$ in case the exact computation of $c$ is impractical in real-world scenarios, and $\delta_{\hat{c}}=\min\{\hat{c}(X\cup v)-\hat{c}(X)|X\subseteq V, v\notin X\}$. When the budget $B$ decreases, POMC has already achieved the $(\alpha_f/2)(1-e^{-\alpha_f})$-approximation ratio; when the budget $B$ increases to $B'$, POMC can regain the $(\alpha_f/2)(1-e^{-\alpha_f})$-approximation ratio using at most $O(n\Delta_BP_{max}/\delta_{\hat{c}})$ expected running time~\cite{aij22Roostapour}, where $\Delta_B=B'-B$. However, the running time of POMC may not be polynomial because the population size $P_{max}$ can grow exponentially~\cite{aij22Roostapour}. Bian et al.~\cite{EAMC} proposed a single-objective EA for maximizing a Monotone function with a monotone Cost constraint, called EAMC, which maximizes a surrogate objective $g(X) = f(X)/(1 - e^{-\alpha_f \hat{c}(X) / B})$, and maintains at most two solutions for each possible size in the population. For the static setting, EAMC ensures the best known approximation ratio using at most $O(n^3)$ expected running time, which is polynomial. However, the surrogate objective $g$ of EAMC changes with the budget $B$, rendering previous solutions potentially ineffective for the new $B$. Bian et al.~\cite{fpomc} then proposed the Fast Pareto Optimization algorithm for maximizing a Monotone function with a monotone Cost constraint, called FPOMC, which is modified from POMC by introducing a greedy selection strategy and estimating the goodness of a solution to be selected. For the static setting, FPOMC obtains the best known $(\alpha_f/2)(1-e^{-\alpha_f})$-approximation ratio using at most $O(n^2K_{B})$ expected running time, which is also polynomial, where $K_{B}$ denotes the largest size of a subset satisfying the constraint $c(X)\leq B$. If the budget decreases, FPOMC maintains the best known approximation ratio; if the budget increases to $B'$, FPOMC can regain the same ratio within an expected running time of $O(n K_{B'}(K_{B'}-K_{B}))$, where $K_{B'}$ is the maximum subset size satisfying the new budget constraint $c(X)\leq B'$. We summarize the related works in Table~\ref{raleted-works}.

Among the algorithms suitable for dynamic environments (AGGA, POMC, EAMC and FPOMC), POMC empirically performs the best~\cite{aij22Roostapour}. However, it was originally designed for static problems and lacks considerations for dynamic environments. A natural question is whether it is possible to design a more advanced algorithm for dynamic environments that can quickly adapt its solutions when the budget $B$ changes. If so, how fast can this algorithm adapt, and can it surpass the speed of the static GGA? 

\begin{table}[t!]
\centering
\caption{Summary of algorithms with approximation guarantees and running time for the subset selection problem with static and dynamic cost constraints. The results obtained in this paper are highlighted in boldface. A ``-" denotes the algorithm is not suitable for the corresponding case, while a ``\ding{55}" means the algorithm is applicable but its performance is unknown.}\label{raleted-works}
\resizebox{\textwidth}{!}{\begin{tabular}{@{}l|c|c|c|c@{}}
\toprule
\multirow{2}{*}{\textbf{Algorithm}} & \multicolumn{2}{c|}{\textbf{Static}} & \multicolumn{2}{c}{\textbf{Dynamic}} \\\cline{2-5}
& \textbf{Guarantee} & \textbf{Running time} & \textbf{Guarantee} & \textbf{Running time} \\
\midrule
\multicolumn{1}{l|}{GGA~\cite{GGA}}    & $(\alpha_f/2)(1-e^{-\alpha_f})$ & $O(n^2)$ & - & - \\
\multicolumn{1}{l|}{AGGA~\cite{aij22Roostapour}}  & - & - & \ding{55} & $O(n^2)$ \\
\multicolumn{1}{l|}{POMC~\cite{POMC,aij22Roostapour}}  & $(\alpha_f/2)(1-e^{-\alpha_f})$ & $O(nBP_{max}/\delta_{\hat{c}})$ & $(\alpha_f/2)(1-e^{-\alpha_f})$ & $O(n\Delta_BP_{max}/\delta_{\hat{c}})$ \\
\multicolumn{1}{l|}{EAMC~\cite{EAMC}}  & $(\alpha_f/2)(1-e^{-\alpha_f})$ & $O(n^3)$ & \ding{55} & \ding{55} \\
\multicolumn{1}{l|}{FPOMC~\cite{fpomc}} & $(\alpha_f/2)(1-e^{-\alpha_f})$ & $O(n^2 K_{B})$   & $(\alpha_f/2)(1-e^{-\alpha_f})$  & $O(n K_{B'}(K_{B'}-K_{B}))$\\
\multicolumn{1}{l|}{\textbf{BPODC}} & \boldmath{$(\alpha_f/2)(1-e^{-\alpha_f})$} & \boldmath{$O(nB/(p_{min}\delta_{\hat{c}}))$} & \boldmath{$(\alpha_f/2)(1-e^{-\alpha_f})$} & \boldmath{$O(n\Delta_B/(p_{min}\delta_{\hat{c}}))$} \\
\bottomrule
\end{tabular}}
\end{table}

In this paper, we introduce BPODC based on Biased Pareto Optimization for maximizing a monotone function under Dynamic Cost constraints. BPODC is a modification of the POMC algorithm, enhanced with a biased selection mechanism and a warm-up strategy. POMC selects a solution uniformly at random from the population, often resulting in the choice of an ``unnecessary" solution and leading to inefficiency. In contrast, BPODC employs a biased selection strategy that favors solutions with cost values closer to the current budget $B$, which are selected with a higher probability. During the initial phase of one run, BPODC employs uniform selection temporarily as a warm-up to obtain a diverse population, which will lead the biased selection strategy to be more efficient. As in~\cite{dynamicgraphcoloring,DynamicPartitionMatroid}, we focus on the ability of BPODC to leverage existing computational results while adapting to budget changes. We prove that BPODC maintains the best-known $(\alpha_f/2)(1-e^{-\alpha_f})$-approximation ratio when the budget $B$ decreases. When the budget increases to $B'$, BPODC can regain the same ratio within an expected running time of $O(n\Delta_B/(p_{min}\delta_{\hat{c}}))$, where $\Delta_B = B' - B$, and $p_{min}$ is the minimum probability of selecting a solution during the run of BPODC. Through empirical evaluation on the applications of influence maximization and maximum coverage, we show that BPODC can find better solutions than previous algorithms while requiring less running time than the static greedy algorithm GGA, offering an alternative for solving subset selection problems with dynamic cost constraints.

\section{Subset Selection with Cost Constraints}

Let $\mathbb{R}$ and $\mathbb{R}^+$ denote the set of reals and non-negative reals, respectively. The set $V=\{v_1,v_2,\ldots,v_n\}$ denotes a ground set. A set function $f:2^V \rightarrow \mathbb{R}$ is monotone if $\forall X \subseteq Y: f(X) \leq f(Y)$. A set function $f$ is submodular if $\forall X \subseteq Y, v \notin Y: f(X \cup \{v\})-f(X) \geq f(Y \cup \{v\}) - f(Y)$, which intuitively represents the diminishing returns property~\cite{nemhauser1978analysis}, i.e., adding an item to a set $X$ gives a larger benefit than adding the same item to a superset $Y$ of $X$. The submodularity ratio in Definition~\ref{def-submodularity-ratio} characterizes how close a set function is to submodularity. When $f$ satisfies the monotone property, we have $0 \leq \alpha_f \leq 1$, and $f$ is submodular iff $\alpha_f=1$. 

\begin{definition}[Submodularity Ratio~\cite{GGA,DBLP:conf/ijcai/Qian0T18}]\label{def-submodularity-ratio}
The submodularity ratio of a non-negative set function $f$ is defined as $\alpha_f=\min_{X \subseteq Y,v \notin Y} \frac{f(X \cup \{v\})-f(X)}{f(Y \cup \{v\})-f(Y)}.$
\end{definition}

As presented in Definition~\ref{def:static-problem}, the subset selection problem with static cost constraints is to maximize a monotone objective function $f$ such that a monotone cost function $c$ is no larger than a budget $B$. We assume w.l.o.g. that monotone functions are normalized, i.e., $f(\emptyset)=0$ and $c(\emptyset)=0$. Since the exact computation of $c(X)$ may be unsolvable in polynomial time in some real-world applications~\cite{POMC,GGA}, we assume that only an $\psi(n)$-approximation $\hat{c}$ can be obtained, where $\forall X \subseteq V: c(X) \leq \hat{c}(X) \leq \psi(n) \cdot c(X)$. If $\psi(n)=1$, $\hat{c}(X)=c(X)$.

\begin{definition}[Subset Selection with Static Cost Constraints]\label{def:static-problem}
Given a monotone objective function $f: 2^V \rightarrow \mathbb{R}^+$, a monotone cost function $c: 2^V \rightarrow \mathbb{R}^+$ and a budget $B$, to find
\begin{align}\label{eq:general-def}
\mathop{\arg\max}\nolimits_{X \subseteq V} f(X) \quad \text{s.t.}\quad c(X)\leq B .
\end{align}
\end{definition}

The static problem in Definition~\ref{def:static-problem} assumes that the budget $B$ is fixed. However, in real-world scenarios, the resources often change over time, and thus the budget $B$ may change dynamically. For example, in influence maximization, the market investment budget changes based on company outcomes and strategies, leading to dynamic cost constraints, that is, the budget $B$ in Eq.~(\ref{eq:general-def}) may change over time. In this paper, we focus on the subset selection problem with dynamic cost constraints, given in Definition~\ref{def:dynamic-problem}. Whenever the budget $B$ changes, the problem can be treated as a new static problem using the updated budget, and static algorithms can be applied from the beginning. However, this may lead to a long running time.


\begin{definition}[Subset Selection with Dynamic Cost Constraints]\label{def:dynamic-problem}
Given a monotone objective function $f: 2^V \!\rightarrow\! \mathbb{R}^+$, a monotone cost function $c: 2^V \!\rightarrow\! \mathbb{R}^+$, and a sequence of changes on the budget $B$, to find a subset optimizing Eq.~(\ref{eq:general-def}) for each new $B$.
\end{definition}

\noindent\textbf{Influence Maximization.} Influence maximization in Definition~\ref{def-IM} is to identify a set of influential users in social networks~\cite{IM-application}.  A social network can be represented by a directed graph $G=(V,E)$, where each node represents a user and each edge $(u,v) \in E$ has a probability $p_{u,v}$ representing the influence strength from user $u$ to $v$. Given a budget $B$, influence maximization is to find a subset $X$ of $V$ such that the expected number of nodes activated by propagating from $X$ is maximized, while not violating the cost constraint $c(X) \leq B$. Here we use the fundamental propagation model called Independence Cascade (IC). Starting from a seed set $X$, it uses a set $A_t$ to record the nodes activated at time $t$, and at time $t+1$, each inactive neighbor $v$ of $u\in A_t$ becomes active with probability $p_{u,v}$; this process is repeated until no nodes get activated at some time. The set of nodes activated by propagating from $X$ is denoted as $IC(X)$, which is a random variable. The objective $\mathbb{E}[IC(X)]$ denotes the expected number of nodes activated by propagating from $X$, which is monotone and submodular.

\begin{definition}[Influence Maximization]\label{def-IM}
Given a directed graph $G=(V,E)$, edge probabilities $p_{u,v}$ where $(u,v) \!\in\! E$, a monotone cost function $c:2^V \rightarrow\mathbb{R}^+$ and a budget $B$, to find
\begin{align*}
\mathop{\arg\max}\nolimits_{X \subseteq V} \mathbb{E}[|IC(X)|] \quad \text{s.t.}\quad c(X)\leq B.
\end{align*}
\end{definition}

\noindent\textbf{Maximum Coverage.}
Given a family of sets that cover a universe of elements, maximum coverage as presented in Definition~\ref{def:MC} is to select some sets whose union is maximal under a cost budget. It is easy to verify that $f$ is monotone and submodular.

\begin{definition}[Maximum Coverage]\label{def:MC}
Given a set $U$ of elements, a collection $V=\{S_1,S_2,\ldots,S_n\}$ of subsets of $U$, a
monotone cost function $c: 2^V\rightarrow \mathbb{R}^+$ and a budget $B$, to find
\begin{align*}
\mathop{\arg\max}\nolimits_{X \subseteq V} f(X)=|\bigcup\nolimits_{S_i\in X} S_i| \quad \text{ s.t. } \quad c(X)\leq B.
\end{align*}
\end{definition}

\subsection{Previous Algorithms}
We now introduce five algorithms capable of solving the subset selection problem with dynamic cost constraints in Definition~\ref{def:dynamic-problem}.

\textbf{GGA.} The Generalized Greedy Algorithm (GGA) proposed in~\cite{GGA} selects one item maximizing the ratio of the marginal gain on $f$ and $\hat{c}$ in each iteration. After examining all items, the found subset is compared with the best single item (i.e., $v^*\in \arg\max_{v\in V: \hat{c}(\{v\})\leq B} f(\{v\})$), and the better one is returned. Let 
\begin{equation}\label{eq-optimal}
f(\tilde{X})=\max\left\{f(X) \mid
c(X) \leq  B\cdot \frac{\alpha_{\hat{c}}(1 + \alpha^2_{c}(K_B - 1)(1 - \kappa_c))}{\psi(n)K_B}\right\},
\end{equation}
where $\alpha_c$ and $\alpha_{\hat{c}}$ are the submodularity ratios of the cost function $c$ and its approximation $\hat{c}$, respectively, $\kappa_c=1-\min _{v \in V: c(\{v\})>0} \frac{c(V)-c(V \backslash\{v\})}{c(\{v\})}$ is the total curvature of $c$, and $K_B=\max\{|X| \mid c(X) \leq B\}$, i.e., the largest size of a subset satisfying the constraint. As $1-\kappa_c\le 1/\alpha_c$, $0 \leq \alpha_{\hat{c}}, \alpha_{c} \leq 1$ and $\psi(n) \geq 1$, it holds that $$\frac{\alpha_{\hat{c}}(1 + \alpha^2_{c}(K_B - 1)(1 - \kappa_c))}{\psi(n)K_B} \leq 1.$$ Thus, $\tilde{X}$ is actually an optimal solution of Eq.~(\refeq{eq:general-def}) with a slightly smaller budget constraint. GGA can solve only the static problem in Definition~\ref{def:static-problem}. Qian \textit{et al.}~\cite{POMC} proved that GGA can obtain a subset $X$ satisfying $f(X) \geq  (\alpha_f/2)\cdot (1-e^{-\alpha_f})\cdot f(\tilde{X})$. The dynamic problem in Definition~\ref{def:dynamic-problem} can be viewed as a series of static problems, each with a different budget $B$. After the budget changes, GGA can be applied to solve the subsequent static problem with the updated budget from scratch.

\textbf{AGGA.} To face the dynamic changes of $B$, Roostapour~\textit{et al.}~\cite{aij22Roostapour} introduced a natural Adaptive version of the static GGA, named AGGA. When $B$ increases, AGGA continues to iteratively add an item with the largest ratio of the marginal gain on $f$ and cost $c$ to the current solution, just like GGA; when $B$ decreases, it iteratively deletes one item with the smallest ratio of the marginal gain on $f$ and cost $c$, until the solution no longer violates the new budget. However, AGGA cannot maintain an approximation for the new budget and may even perform arbitrarily badly~\cite{aij22Roostapour}.

\textbf{POMC.} Qian \textit{et al.}~\cite{POMC} proposed POMC, a Pareto Optimization method for maximizing a Monotone function with a monotone Cost constraint, which reformulates the original problem Eq.~(\refeq{eq:general-def}) as a bi-objective maximization problem that maximizes the objective function $f$ and minimizes the approximate cost function $\hat{c}$ simultaneously~\cite{ecj15submodular,qian2015subset}. To solve the bi-objective problem, POMC employs a simple multi-objective EA, i.e., GSEMO~\cite{LaumannsTEC04,qian2019}, which uses uniform selection and bit-wise mutation to generate an offspring solution and keeps the non-dominated solutions generated-so-far in the population. After terminated, it returns the best feasible solution with the largest $f$ value in the population. For the static setting of a fixed budget $B$, POMC can achieve the best known $(\alpha_f/2)(1-e^{-\alpha_f})$-approximation ratio, and also regain this approximation ratio at most $O(n\Delta_BP_{max}/\delta_{\hat{c}})$ expected running time when the budget changes~\cite{POMC,aij22Roostapour}. 


\textbf{EAMC.} Bian \textit{et al.}~\cite{EAMC} proposed a single-objective EA for maximizing
a Monotone function with a monotone Cost constraint, called EAMC. It tries to maximize a surrogate objective $g$ which considers both the original objective $f$ and the cost $\hat{c}$. For $|X|\ge 1$, $g(X)=f(X)/(1-e^{-\alpha_f \hat{c}(X)/B})$, while for $|X| = 0$, $g(X) = f(X)$. The submodularity ratio $\alpha_f$, used to calculate the surrogate objective $g$, may require exponential time to compute accurately, so a lower bound on $\alpha_f$ is often used instead. EAMC also applies uniform selection and bit-wise mutation to generate an offspring solution as POMC. For each subset size $i$, EAMC contains the solutions with the largest $g$ or $f$ values, bounding the maximum population size. After terminated, EAMC returns the feasible solution with the largest $f$ value in the population. EAMC can achieve the best-known $(\alpha_f/2)(1-e^{-\alpha_f})$-approximation in a static setting. However, in a dynamic setting, the $g$ function, based on the old budget, cannot characterize the new problem well, potentially leading to poor performance with the new budget.

\textbf{FPOMC.} Bian et al.~\cite{fpomc} then proposed the Fast Pareto Optimization algorithm for maximizing a Monotone function with a monotone Cost constraint, called FPOMC, which is modified from POMC. The main difference between FPOMC and POMC is that FPOMC applies a greedy selection strategy, while POMC applies uniform selection. The greedy selection strategy uses a function $h_{Z}(X)$ to estimate the goodness of a solution $X$ w.r.t. a reference point $Z$, which is defined as
\begin{equation}\nonumber
	h_{Z}(X)=\left\{\begin{array}{ll}
		(f(X)-f(Z))/(\hat{c}(X)-\hat{c}(Z)) &  \quad\hat{c}(X)>\hat{c}(Z), \\
		(f(X)-f(Z)) \cdot C +\hat{c}(Z)- \hat{c}(X) & \quad\hat{c}(X) \leq \hat{c}(Z),
	\end{array}\right.
\end{equation}
where $C$ is a large enough number. Intuitively, $h_{Z}(X)$ measures the goodness of $X$ by the marginal gain on $f$ and $c$  w.r.t. a reference point $Z$, and the solution with the largest $h$ value is selected with a high probability. For more detailed design of FPOMC, please refer to~\cite{fpomc}. For the static setting, FPOMC can obtain the best known $(\alpha_f/2)(1-e^{-\alpha_f})$-approximation ratio, and also regain the same approximation ratio at most $O(n K_{B'}(K_{B'}-K_{B}))$ expected running time for the new budget.

AGGA and the static GGA are greedy algorithms. POMC, EAMC and FPOMC are anytime algorithms that can find better solutions using more time. Among them, POMC performs the best empirically in dynamic environments~\cite{aij22Roostapour}; however, it was initially designed for static settings and lacks considerations for dynamic environments. This work focuses on designing an advanced algorithm tailored for dynamic environments, aiming to quickly adapt its solutions to budget changes and potentially exceed the speed of the static GGA.

\section{The BPODC Algorithm}\label{Algorithm}
In this section, we propose an algorithm based on Biased Pareto Optimization~\cite{aij22Roostapour,DynamicPartitionMatroid,0002Z022,ecj15submodular,Qian20,qian2021multiobjective,0001BF20,QianLFT23,QianLZ22,QianSYTZ17,QianS0TZ17,qian2019,qian2015subset,QianZT018,zhang2023sparsity} for maximizing a monotone function with Dynamic Cost constraints, called BPODC, which can quickly adapt its solutions when budget changes. It represents a subset $X \subseteq V$ by a Boolean vector $\bm{x} \in \{0,1\}^n$, where the $i$-th bit $x_i\!=\!1$ iff $v_i \in X$. We will not distinguish $\bm{x} \in \{0,1\}^n$ and its corresponding subset $\{v_i\in V | x_i = 1\}$ for notational convenience. BPODC reformulates the original problem Eq.~(\ref{eq:general-def}) as a bi-objective maximization problem
\begin{align}\label{def-CO-BO}
\arg\max\nolimits_{\bm{x} \in \{0,1\}^n}& \;  \big(f_1(\bm{x}),\;f_2(\bm{x})\big),
\end{align}
\begin{equation*}
\text{where } \begin{aligned}
f_1(\bm{x}) = \begin{cases}
	-\infty, & \hat{c}(\bm x)> B+1\\
	f(\bm x), &\text{otherwise}
\end{cases},\quad
f_2(\bm{x}) =-\hat{c}(\bm{x}).
\end{aligned}
\end{equation*} 
That is, BPODC maximizes the objective function $f$ and the negative of the approximate cost function $\hat{c}$ simultaneously. Solutions with cost values over $B+1$ (i.e., $\hat{c}(\bmx) > B+1$) are excluded by setting their $f_1$ values to $-\infty$. We use the value $B+1$ to give the algorithm a slight look ahead for larger constraint bounds without making the population size too large. The introduction of the second objective $f_2$ can naturally bring a diverse population, which may lead to better optimization performance.

Note that the objective vector $(f_1(\bmx), f_2(\bmx))$ is calculated only when the solution $\bmx$ is generated. This means that any subsequent changes to $B$ do not trigger an update of the objective vector. Thus, solutions exceeding cost $B'+1$ for a new budget $B'$ are still retained in the population. However, for any new solutions exceeding $B'+1$, the $f_1$ value is set to $-\infty$. As the two objectives may be conflicting, the domination relationship in Definition~\ref{def:domination} is often used for comparing two solutions. A solution is Pareto optimal if no other solution dominates it. The collection of objective vectors of all Pareto optimal solutions is called the Pareto front.

\begin{definition}[Domination]\label{def:domination}
For two solutions $\bm x$ and $\bm x'$, 
\begin{itemize}
	\item $\bm{x}$ \emph{weakly dominates} $\bm{x}'$, denoted as $\bm{x} \succeq \bm{x}'$, if $f_1(\bm{x}) \geq f_1(\bm{x}') \wedge f_2(\bm{x}) \geq f_2(\bm{x}')$; 
	\item $\bm{x}$ \emph{dominates} $\bm{x}'$, denoted as $\bm{x} \succ \bm{x}'$, if ${\bm{x}} \succeq \bm{x}'$ and either $f_1({\bm{x}}) > f_1(\bm{x}')$ or $f_2(\bm{x}) > f_2(\bm{x}')$; 
	\item they are \emph{incomparable} if neither $\bm{x} \succeq \bm{x}'$ nor $\bm{x}' \succeq \bm{x}$. 
\end{itemize}
\end{definition}

\begin{algorithm}[t]
\caption{BPODC Algorithm}
\label{alg:BPODC}
\textbf{Input}: a monotone objective function $f$, a monotone approximate cost function $\hat{c}$, a ground set with $n$ items, and a sequence of changes on the budget B\\
\textbf{Output}: a solution $\bm{x} \in \{0,1\}^n$ with $\hat{c}(\bm{x}) \leq B$\\
\textbf{Process}:
\begin{algorithmic}[1]
\STATE Let $\bm{x}=0^{n}$, $P=\{\bm{x}\}$;
\REPEAT
\IF{warm-up stage} 
\STATE Select $\bmx$ from $P$ uniformly at random
\ELSE
\STATE $\bmx=\text{Biased Selection}(P,B)$
\ENDIF
\STATE Generate $\bm{x}'$ by flipping each bit of $\bm{x}$ with probability $1/n$;
\IF {$\nexists \bm{z} \in P$ such that $\bm{z} \succ \bm{x}'$}
\STATE $P = (P \setminus  \{\bm{z} \in P \mid \bm{x}' \succeq \bm z\})\cup\{\bm{x}'\}$
\ENDIF
\UNTIL{some criterion is met}
\STATE \textbf{return} $\arg\max_{\bm x\in P, \hat{c}(\bm x)\leq B } f(\bm{x})$
\end{algorithmic}
\end{algorithm}

After constructing the bi-objective problem in Eq.~(\ref{def-CO-BO}), BPODC solves it by a process of multi-objective EAs, as described in Algorithm~\ref{alg:BPODC}. EAs, inspired by Darwin’s theory of evolution are general-purpose randomized heuristic optimization algorithms~\cite{back:96,DBLP:books/sp/ZhouYQ19}, mimicking variational reproduction and natural selection, which have become the most popular tool for multi-objective optimization~\cite{coello2007evolutionary,hong2021evolutionary,PanMWZJYH23}. It starts from the empty set $0^{n}$ (line~1), and repeatedly improves the quality of solutions in population $P$ (lines~2--12). At the start of the process, BPODC applies uniform selection to select a parent solution $\bm x$ for a period of time, which is referred to the warm-up stage (lines~3-4). Our aim is to uniformly explore the $(0, B+1]$ area initially to obtain a population with good diversity. After the warm-up stage, BPODC selects a parent solution $\bmx$ in $P$ according to the \textsc{Biased Selection} subroutine in Algorithm~\ref{alg:select} (line~6). Then, a solution $\bmx'$ is generated by applying bit-wise mutation on $\bmx$ (line~8), which is used to update the population $P$ (line~9-10). If $\bm{x}'$ is not dominated by any solution in $P$ (line~9), it will be added into $P$, and meanwhile, those solutions weakly dominated by $\bm{x}'$ will be deleted (line~10). This updating procedure makes the population $P$ always contain incomparable solutions. After running a number of iterations, the best feasible solution with the largest $f$ value in the population $P$ is output (line~13). Note that the aim of BPODC is to find a good solution of the original problem in Definition~\ref{def:dynamic-problem}, rather than the Pareto front of the reformulated bi-objective problem in Eq.~(\ref{def-CO-BO}). That is, the bi-objective reformulation is an intermediate process.

\begin{algorithm}[t]
	\caption{Biased Selection($P$, $B$): Subroutine of BPODC}
	\label{alg:select}
	\textbf{Input}: the population $P$ and the budget $B$\\
	\textbf{Output}: a solution in $P$ for mutation\\
	\textbf{Process}:
	\begin{algorithmic}[1]
        \STATE $\epsilon \gets 1 \times 10^{-10}$;
        \FOR{$i=1$ to $|P|$}
        \STATE $\bmx \gets$ the $i$-th solution in population $P$;
        \STATE $\textit{probs}[i] \gets 1/(|c(\bmx)-B|+\epsilon)$
        \ENDFOR
	\STATE $\textit{probs} \gets \textit{Normalization}(\textit{probs})$;
        \STATE  Select $\bmx$ from $P$ according to the probabilities $\textit{probs}$
	\RETURN $\bmx$ 
	\end{algorithmic}
\end{algorithm}

The \textsc{Biased Selection} subroutine in Algorithm~\ref{alg:select} first computes a selection probability of each solution $\bm x\in P$ iteratively, which is inversely proportional to the difference between the cost value $c(\bm x)$ and the given budget $B$ (lines~2-5). The $\epsilon$ is added to avoid division by zero (line~4). The subroutine then normalizes the probabilities, and selects a solution $\bm x$ from $P$ based on these normalized probabilities (lines~6-7). Algorithm~\ref{alg:select} exhibits a bias whereby solutions with cost values close to the budget $B$ have a higher probability of selection. This enables BPODC to quickly regain high-quality solutions upon budget changes, thereby meeting the demands of dynamic environments.

Note that BPODC uses the warm-up strategy only for the initial budget, starting from the zero solution; if the budget changes, it continues from the current population using biased selection.

\section{Theoretical Analysis}
In this section, we prove the general approximation bound of BPODC in Theorem~\ref{thm1}, implying that BPODC can achieve the best known $(\alpha_f/2)(1-e^{-\alpha_f})$-approximation guarantee for the static problem in Definition~\ref{def:static-problem}. When facing a dynamic budget change, BPODC still can regain the $(\alpha_f/2)(1-e^{-\alpha_f})$-approximation ratio (Theorem~\ref{thm2}). 

Let $p_{min}$ denote the minimum probability of selecting a solution from the population during the run of BPODC and  $\delta_{\hat{c}}=\min\{\hat{c}(X\cup v)-\hat{c}(X) | X\subseteq V, v\notin X\}$. We assume that $\delta_{\hat{c}}>0$. The proof of Theorem~\ref{thm1} is based on the approach used in Theorem~2 of~\cite{POMC}, mainly analyzing the expected number of iterations of BPODC required to obtain an $(\alpha_f/2)(1-e^{-\alpha_f})$-approximation solution.

\begin{theorem}\label{thm1}
	For the static problem in Definition~\ref{def:static-problem}, BPODC using at most $O(nB/(p_{min}\cdot\delta_{\hat{c}}))$ expected number of iterations finds a subset $X\subseteq V$ with
	$$
	f(X)\ge (\alpha_f/2)\cdot (1-e^{-\alpha_f})\cdot f(\tilde{X}),
	$$
	where $f(\tilde{X})$ is defined in Eq.~(\ref{eq-optimal}).
\end{theorem}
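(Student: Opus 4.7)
The plan is to mirror the convergence proof of POMC (Theorem~2 in~\cite{POMC}), substituting the $p_{min}$ bound on selection probability for the $1/|P|$ factor used there. The whole argument reduces to a waiting-time analysis for a single ``progress'' quantity measuring how far BPODC has internally reproduced the greedy generalized-knapsack sequence from~\cite{GGA}.

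First I would fix a greedy reference chain $J_0 = \emptyset \subset J_1 \subset \cdots \subset J_k$, where $v_i = J_i \setminus J_{i-1}$ maximizes the ratio of the marginal $f$-gain to the marginal $\hat{c}$-cost over $V \setminus J_{i-1}$, and the chain stops right before violating the budget, so $\hat{c}(J_k) \le B$ and $k = O(B/\delta_{\hat{c}})$. Using the submodularity-ratio argument reused throughout the POMC line of work, one obtains
$$f(J_i) \,\ge\, \Big(1 - \prod\nolimits_{j=1}^{i} \big(1 - \alpha_f (\hat{c}(J_j) - \hat{c}(J_{j-1}))/B\big) \Big) \cdot f(\tilde{X}) \;=:\; L_i \cdot f(\tilde{X}),$$
and $L_k \ge 1 - e^{-\alpha_f}$ by a standard exponential inequality.

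Next I would define the progress measure $I = \max\{\, i : \exists\, \bmx \in P\ \text{with}\ \hat{c}(\bmx) \le \hat{c}(J_i)\ \text{and}\ f(\bmx) \ge L_i \cdot f(\tilde{X}) \,\}$, with $I \ge 0$ always holding because the zero solution (or any solution weakly dominating it) witnesses level $0$. A witness is removed from $P$ only when a weakly dominating solution is inserted (lines~9--10 of Algorithm~\ref{alg:BPODC}), and such a replacement still witnesses level $I$; hence $I$ is non-decreasing. To advance $I$ from $i$ to $i+1$, it suffices to select the current witness (probability $\ge p_{min}$ by definition, covering both the warm-up and the biased-selection phases) and then flip precisely the bit of $v_{i+1}$ (probability $(1/n)(1-1/n)^{n-1} \ge 1/(en)$). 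A direct computation, as in~\cite{POMC}, verifies that the offspring $\bmx'$ satisfies $\hat{c}(\bmx') \le \hat{c}(J_{i+1}) \le B$ and $f(\bmx') \ge L_{i+1} \cdot f(\tilde{X})$, so it enters $P$ and witnesses level $i+1$. The per-iteration advancement probability is therefore at least $p_{min}/(en)$, yielding expected waiting time $O(n/p_{min})$; summing over the $O(B/\delta_{\hat{c}})$ levels gives the stated bound $O(nB/(p_{min}\delta_{\hat{c}}))$.

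Once $I = k$, the population contains a feasible $\bmx^\star$ with $f(\bmx^\star) \ge (1 - e^{-\alpha_f}) f(\tilde{X})$. Combined with the best single-element solution (produced from $\bm{0}^n$ by a single-bit mutation within the same time budget and retained in $P$ forever), the standard $\max\{a,b\} \ge (a+b)/2$ trick used for the greedy knapsack~\cite{GGA} supplies the extra $\alpha_f/2$ factor. The main obstacle I expect is carefully verifying the non-decreasing invariant on $I$ in the presence of BPODC's truncation $f_1(\bmx) = -\infty$ for $\hat{c}(\bmx) > B+1$ and the fact that oversized solutions are retained in $P$; fortunately any witness has $\hat{c} \le B$ and is therefore never truncated, but confirming that retained oversized solutions can neither dominate a witness nor inflate the population in a way that violates the $p_{min}$ lower bound needs some bookkeeping.
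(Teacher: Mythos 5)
Your overall strategy---a waiting-time analysis on a non-decreasing progress measure, advanced with probability at least $p_{min}\cdot\frac{1}{n}(1-\frac1n)^{n-1}\ge p_{min}/(en)$ per iteration, plus the separate single-item solution and the $\max\{a,b\}\ge (a+b)/2$ trick---is exactly the paper's strategy, which itself follows Theorem~2 of~\cite{POMC} with $p_{min}$ in place of $1/P_{max}$. However, there is a genuine gap in how you formulate the progress measure. You fix a greedy reference chain $J_0\subset J_1\subset\cdots\subset J_k$ in advance and claim that level $I=i$ can be advanced by selecting the current witness $\bmx$ and flipping the bit of the \emph{predetermined} element $v_{i+1}=J_{i+1}\setminus J_i$. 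But the witness for level $i$ is only guaranteed to satisfy $\hat{c}(\bmx)\le\hat{c}(J_i)$ and $f(\bmx)\ge L_i\cdot f(\tilde{X})$; it need not equal, contain, or be contained in $J_i$ (it may have been replaced by an arbitrary weakly dominating solution, or reached level $i$ by an entirely different route). For such an $\bmx$, neither $\hat{c}(\bmx\cup\{v_{i+1}\})\le\hat{c}(J_{i+1})$ nor $f(\bmx\cup\{v_{i+1}\})\ge L_{i+1}\cdot f(\tilde{X})$ follows: the marginal gain of $v_{i+1}$ on $\bmx$ can be arbitrarily small (indeed zero if $v_{i+1}\in\bmx$, in which case the flip \emph{removes} it), and the submodularity-ratio inequality only compares nested sets. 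So the ``direct computation'' you defer to does not go through for a fixed chain.

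The fix---which is what the paper and~\cite{POMC} actually do---is to index progress by the cost value rather than by chain position: $J_{max}=\max\{j\in[0,B)\mid \exists\bmx\in P,\ \hat{c}(\bmx)\le j \wedge f(\bmx)\ge (1-(1-\alpha_f\frac{j}{Bk})^k)\cdot f(\tilde{X})\text{ for some }k\}$, and to choose the item to add \emph{adaptively as a function of the current witness} $\bmx$, namely the item maximizing the ratio of marginal $f$-gain to marginal $\hat{c}$-cost with respect to $\bmx$. The key lemma guarantees that for \emph{any} set $X$ some such item yields gain at least proportional to $f(\tilde{X})-f(X)$ per unit cost, which is what lets an arbitrary witness make progress and increases $J_{max}$ by at least $\delta_{\hat{c}}$ each time, giving the $B/\delta_{\hat{c}}$ bound on the number of advancements. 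Your remaining bookkeeping concerns (oversized solutions in $P$, the truncation at $B+1$) are handled exactly as you suspect: witnesses are feasible and never truncated, and $p_{min}$ is by definition a lower bound on the selection probability regardless of what else the population contains.
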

\begin{proof}
We follow the proof of POMC in Theorem~2 of~\cite{POMC}, which analyzes the increase of a quantity $J_{max}$ during the process of POMC, where $J_{max}=\max\{j\in [0,B) | \exists \bmx \in P, \hat{c}(\bm x)\leq j \land f(\bm x)\ge (1-(1-\alpha_f\frac{j}{Bk})^k)\cdot f(\tilde{X}) \text{ for some } k\}$. Let $\bmx \in P$ be a solution corresponding to the current value of $J_{max}$. It is pointed out that $J_{max}$ can increase at least $\delta_{\hat{c}}$ by adding a specific item to $\bmx$. Furthermore, they proved that POMC can find a solution with the $f$ value at least 
$$\max\{f(\bm{p}),f(\bm{q})\}\ge (\alpha_f/2)\cdot(1-e^{-\alpha_f})\cdot f(\tilde{X}),$$ 
where $\bm{p}$ is the solution that results from increasing $J_{max}$ at most $B/\delta_{\hat{c}}$ times starting from $J_{max}=0$, and $\bm{q}$ is defined as $\bm{q}=\arg\max_{v\in V: \hat{c}(\bm{q})\leq B} f(v)$.

Our proof finishes by analyzing the expected number of iterations until BPODC contains two solutions $\bm{p}$ and $\bm{q}$. We first analyze the expected number of iterations of BPODC to generate the solution $\bm{p}$. The initial value of $J_{max}$ is 0 as BPODC starts from $\{0\}^n$. Assume the current value of $J_{max}$ is $i$, and $\bmx\in P$ is a corresponding solution. According to the population update mechanism described in lines~9-10 of Algorithm~\ref{alg:BPODC}, $J_{max}$ cannot decrease because $\bm{x}$ can be deleted from $P$ (line~10) only when the newly included solution $\bm{x}'$ weakly dominates $\bm{x}$, i.e., $f(\bm{x}')\ge f(\bm{x})$ and $\hat{c}(\bm{x}') \leq \hat{c}(\bm{x})$, which makes $J_{max}\ge i$. As mentioned above, to increase $J_{max}$ from $i$ to at least $i+\delta_{\hat{c}}$, we can add a specific item to $\bm{x}$ to generate a new solution $\bm x'$. Upon generating $\bm x'$, it will be included into $P$; otherwise, $\bm x'$ must be dominated by one solution in $P$ (line~9 of Algorithm~\ref{alg:BPODC}), and this implies that $J_{max}$ has already been larger than $i$, which contradicts with the assumption $J_{max}=i$. In each iteration, the probability of successfully increasing $J_{max}$ is at least $p_{min}\cdot\frac{1}{n}\cdot(1-\frac{1}{n})^{n-1}\ge\frac{p_{min}}{en}$, where $p_{min}$ is a lower bound on the probability of selecting $\bmx$ in line~4 or line~6 of Algorithm~\ref{alg:BPODC} and $\frac{1}{n}\cdot(1-\frac{1}{n})^{n-1}$ is the probability of flipping a specific bit of $\bmx$ while keeping
other bits unchanged in line~8. Then, it needs at most $en/p_{min}$ expected number of iterations to increase $J_{max}$ by at least $\delta_{\hat{c}}$. After at most $enB/(p_{min}\cdot\delta_{\hat{c}}) $ expected number of iterations, $\bm{p}$ will be generated and included into $P$; otherwise, $P$ has already contained a solution $\bm{z} \succeq \bm{p}$, i.e., $\hat{c}(\bmz)\leq \hat{c}(\bm{p})\leq B$ and $f(\bmz)\ge f(\bm{p})$.

We now analyze the expected number of iterations to generate and contain the solution $\bm{q}$. Since $\{0\}^n$ has the largest $f_2$ value (i.e., the smallest $\hat{c}$ value), no other solution can dominate it, ensuring that $\{0\}^n$ will always be included in $P$. Thus, $\bm{q}$ can be generated in one iteration by selecting $\{0\}^n$ in line~4 or 6 of Algorithm~\ref{alg:BPODC} and flipping only the corresponding 0-bit in line~8, whose probability is at least $p_{min}/en$. That is, $\bm{q}$ will be generated using at most $en/p_{min}$ expected number of iterations.

Taking the expected number of iterations for generating $\bm{p}$ and $\bm{q}$ together, BPODC using at most $O(nB/(p_{min}\cdot\delta_{\hat{c}}))$ expected number of iterations finds a solution with the $f$ value at least $\max\{f(\bm{p}),f(\bm{q})\}\ge (\alpha_f/2)\cdot(1-e^{-\alpha_f})\cdot f(\tilde{X})$. 
\end{proof}

\begin{theorem}\label{thm2}
For the dynamic problem in Definition~\ref{def:dynamic-problem}, assume that BPODC has achieved an $(\alpha_f/2)(1-e^{-\alpha_f})$-approximation ratio for the current budget $B$ after running at most $O(nB/(p_{min}\cdot\delta_{\hat{c}}))$ expected number of iterations: 

(1) when $B$ decreases to $B'$, BPODC has already achieved the $(\alpha_f/2)(1-e^{-\alpha_f})$-approximation ratio for the new budget;  

(2) when $B$ increases to $B'$, BPODC using at most $O(n(B'-B)/(p_{min}\cdot\delta_{\hat{c}}))$ expected number of iterations, can regain the $(\alpha_f/2) (1-e^{-\alpha_f})$-approximation ratio.
\end{theorem}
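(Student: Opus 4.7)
The plan is to follow the chain-based approach used in Theorem~\ref{thm1}, tracking how the Pareto chain inside $P$ evolves when the budget changes. Throughout, let $\bm{x}^*_B\in P$ be the solution attaining the $(\alpha_f/2)(1-e^{-\alpha_f})$-approximation for budget $B$ that Theorem~\ref{thm1} guarantees, and recall that its proof implicitly constructs a chain $\bm{p}_0=0^n,\bm{p}_1,\ldots,\bm{p}_k=\bm{x}^*_B$ (or solutions weakly dominating each $\bm{p}_i$) whose $J_{max}$ values grow by at least $\delta_{\hat{c}}$ per step, together with a singleton $\bm{q}$ of largest $f$-value among feasible items. The population update rule in lines~9--10 of Algorithm~\ref{alg:BPODC} guarantees that these chain elements can only be removed from $P$ by weakly dominating replacements, so the chain (possibly with improved elements substituted in) persists across a budget change.

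For part~(1), when $B$ decreases to $B'<B$, I would simply reapply the chain argument of Theorem~\ref{thm1} with $B'$ in place of $B$. Since the chain already covers every cost level in $[0,B]$, there is some $\bm{p}_j\in P$ with $\hat{c}(\bm{p}_j)\le B'$ whose $f$-value meets the greedy lower bound required by the $J_{max}$-analysis for budget $B'$. Combined with the singleton $\bm{q}'\in\arg\max_{v\in V:\hat{c}(\{v\})\le B'} f(\{v\})$, which is a subset of the singleton set that was already explored under budget $B$ and hence is still present in $P$, the bound $\max\{f(\bm{p}_j),f(\bm{q}')\}\ge (\alpha_f/2)(1-e^{-\alpha_f})f(\tilde{X}_{B'})$ holds without any additional iterations.

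For part~(2), when $B$ increases to $B'>B$, I would start from $\bm{x}^*_B\in P$ and extend the chain from its current endpoint (at $J_{max}\approx B$) up to roughly $B'$, which requires at most $\lceil(B'-B)/\delta_{\hat{c}}\rceil$ further $J_{max}$-increments. Each increment only needs to select the current chain endpoint in line~6 and flip one designated 0-bit in line~8; by the same event probability used in Theorem~\ref{thm1}, this happens with probability at least $p_{min}/(en)$ per iteration, so each increment costs $O(n/p_{min})$ expected iterations and the total cost is $O(n(B'-B)/(p_{min}\cdot\delta_{\hat{c}}))$. The singleton $\bm{q}'$ for the new budget is already in $P$ from the earlier run, so combining $\bm{p}'$ (the new chain endpoint) with $\bm{q}'$ yields $\max\{f(\bm{p}'),f(\bm{q}')\}\ge (\alpha_f/2)(1-e^{-\alpha_f})f(\tilde{X}_{B'})$ within the stated bound.

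The main obstacle is justifying, in both directions, that the chain truly survives a budget change and that the $p_{min}$ factor remains valid during the extension phase. For chain preservation, the key observation is that any replacement inside $P$ happens only via weak domination, which improves or preserves both $f$ and $\hat{c}$ and therefore preserves the $J_{max}$-style bound needed by the Theorem~\ref{thm1} argument; this has to be written out carefully because the formula for $\tilde{X}$ in Eq.~(\ref{eq-optimal}) depends on the current budget, so the $k$ used in the bound $(1-(1-\alpha_f j/(Bk))^k)f(\tilde X)$ must be re-chosen appropriately when $B$ is replaced by $B'$. For the selection probability, I would appeal to the definition of $p_{min}$ as a worst-case lower bound over the entire run of BPODC, so it already accounts for every population configuration encountered during the extension, making the $O(n/p_{min})$ per-increment bound immediate from the biased-selection weights being strictly positive.
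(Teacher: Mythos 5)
Your proposal is correct and follows essentially the same route as the paper: part (1) is handled by observing that the $J_{max}$ chain built while reaching budget $B$ already certifies the approximation for every $B'\le B$, and part (2) by continuing to increase $J_{max}$ from $B$ to $B'$ at a cost of $O(n/p_{min})$ expected iterations per $\delta_{\hat{c}}$-increment, giving $O(n(B'-B)/(p_{min}\cdot\delta_{\hat{c}}))$ in total. The paper's own proof is in fact a three-sentence sketch of exactly this argument, so your added discussion of chain persistence under the population-update rule and of $p_{min}$ as a run-wide lower bound only fills in details the paper leaves implicit (your assertion that the best feasible singleton for the new budget is necessarily already present in $P$ is the one point that deserves more care, but the paper glosses over it as well).
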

\begin{proof}
By analyzing the increasing process of $J_{max}$ from 0 to $B$ as shown in the proof of Theorem~\ref{thm1}, BPODC using $O(nB/(p_{min}\cdot\delta_{\hat{c}}))$ expected number of iterations has achieved an $(\alpha_f/2)\cdot (1-e^{-\alpha_f})$-approximation ratio for the problem with any budget $B'\leq B$. When the budget $B$ increases to $B'$, in order to achieve the desired approximation guarantee, it is sufficient to continuously increase $J_{\max}$ from $B$ to $B'$. The expected number of iterations required for the increase is $O(n(B'-B)/(p_{min}\cdot\delta_{\hat{c}}))$. 
\end{proof}




\section{Empirical Study}
In this section, we empirically examine the performance of BPODC on dynamic variants of subset selection problems, specifically focusing on influence maximization and maximum coverage tasks where the cost constraint varies over time. We compare BPODC with several competitive algorithms: the static greedy algorithm GGA, the dynamic greedy algorithm AGGA, and the EA-based methods, POMC, EAMC, and FPOMC.

POMC and FPOMC use the same bi-objective as BPODC, as shown in Eq.~(\ref{def-CO-BO}), which is calculated only when the solution $\bmx$ is generated, and any subsequent changes to the budget $B$ do not trigger an update, that is, algorithms do not delete solutions from the population that become infeasible under a new budget. For EAMC, a budget change influences the value of surrogate function $g(X)=f(X)/(1-e^{-\alpha_f \hat{c}(X)/B})$. Thus, the value of function $g$ for solutions in the population is updated after each dynamic change. Note that the parameter $\alpha_f$ equals 1 because the objective functions of influence maximization and maximum coverage are submodular. The static greedy algorithm GGA requires $n(n+1)/2$ objective evaluations, denoted as $T_{G}$. For each budget change, the number of objective evaluations for EAs (BPODC, POMC, EAMC, and FPOMC) is set to $t = \{ 0.25T_{G},0.5T_{G}\}$, which is less than that required by GGA. Note that for the $t$ evaluations of the initial budget, BPODC uses the warm-up strategy to create a diverse population; for subsequent $t$ evaluations of a changed budget, it employs biased selection on the current population without a warm-up. For randomized algorithms (BPODC, POMC, EAMC and FPOMC), we independently run 30 times and report the average results. The source code is available at \href{https://github.com/lamda-bbo/BPODC}{https://github.com/lamda-bbo/BPODC}.


The experiments are mainly to answer two questions: Can BPODC perform the best among all algorithms under dynamic environments? Can BPODC adapt its solution within a shorter running time than the static GGA?

\textbf{Influence Maximization.} 
We use the same two datasets as social networks in influence maximization as in~\cite{EAMC}, called \textit{graph100} (100 vertices, 3,465 edges) and \textit{graph200} (200 vertices, 9,950 edges), respectively. The probability of each edge is set to 0.05. Besides, we use a larger real-world dataset, \textit{frb35-17-1} (595 vertices, 27,856 edges), with each edge's probability 0.01. We consider the linear cost constraint, where $c(X)=\sum_{v\in X}c_v$. The cost of each item is calculated based on its out-degree $d(v)$, i.e., $c_v=1+(1+|\xi|)\cdot d(v)$, where $\xi$ is a random number drawn from the normal distribution $\mathcal{N}(0,0.5^2)$. The original budget is set to 300, and stays within the interval $[100, 500]$. We consider a sequence of 100 budget changes obtained by randomly changing the current budget $B$ by a value of $[-10,10]$. The cumulative changes relative to the initial budget are depicted in Figure~\ref{fig-budgets}(a). The current budget at each time is calculated by adding the $y$-value at that point to the initial budget. To calculate $\mathbb{E}[|IC(X)|]$ in our experiments, we simulate the random propagation process starting from the solution $X$ for 500 times independently, and use the average as an estimation. Due to time constraints, we limit the number of objective evaluations of EAs (BPODC, POMC, EAMC, and FPOMC) on the \textit{frb35-17-1} dataset, i.e., $t_0 = 0.5T_{G}$ evaluations for the initial budget and $t = \{ 0.05T_{G},0.1T_{G}\}$ evaluations for upcoming budgets. This setting is to ensure that the study could be completed within the available timeframe while still providing meaningful insights into the algorithm's performance trends. Since the behavior of the greedy algorithm GGA or AGGA is randomized under noise, we also repeat its run 30 times independently and report the average results. The curves of average results over each time of change are plotted in Figure~\ref{fig-IM}, where the shaded areas indicate the standard deviation around the mean.

\begin{figure}[t!]
  \centering
  \begin{minipage}{0.4\textwidth} 
    \centering
    \includegraphics[width=0.8\linewidth]{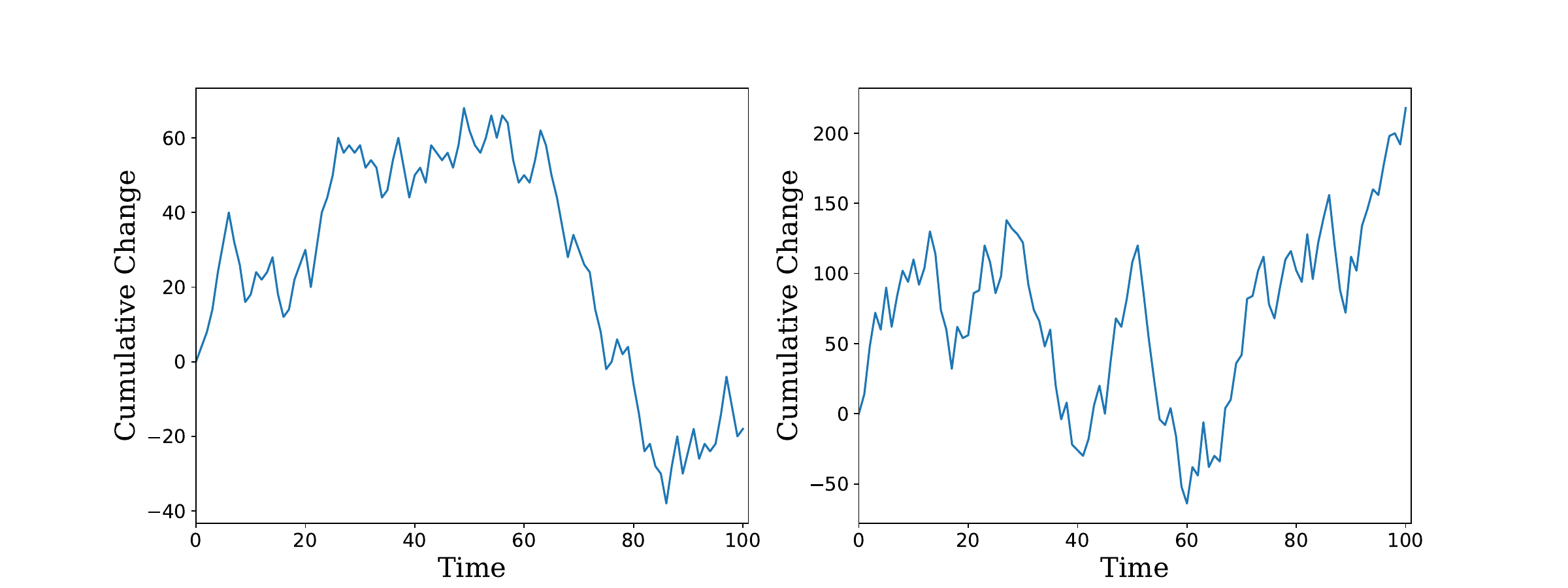}
  \end{minipage}
  \hspace{0.05\textwidth} 
  \begin{minipage}{0.4\textwidth} 
    \centering
    \includegraphics[width=0.8\linewidth]{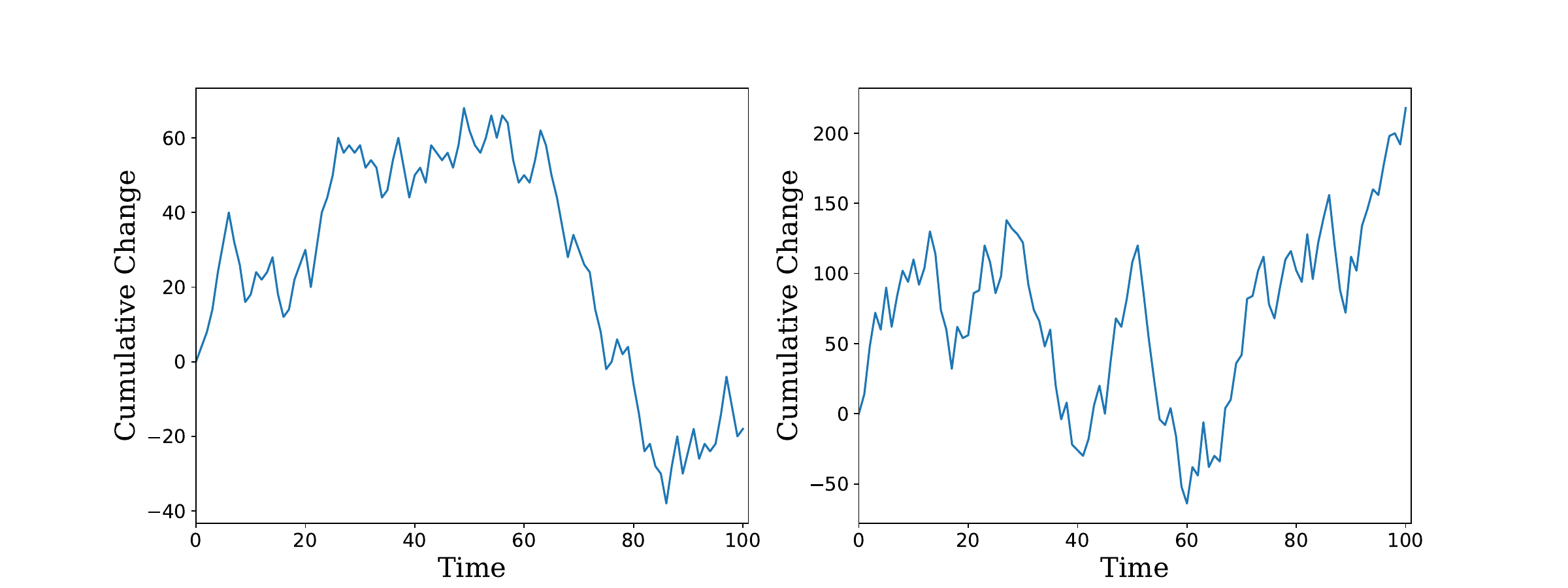}
  \end{minipage}\\
  \begin{minipage}[c]{0.5\linewidth}\centering
    \small(a) Influence Maximization
  \end{minipage}
  \begin{minipage}[c]{0.4\linewidth}\centering
    \small(b) Maximum Coverage
  \end{minipage}\vspace{-0.5em}
    \caption{The cumulative budget changes relative to the initial budget.}\label{fig-budgets}\vspace{-1em}
\end{figure}

\begin{figure}[t!]
\begin{minipage}[c]{1\linewidth}\centering
        \includegraphics[width=1\linewidth]{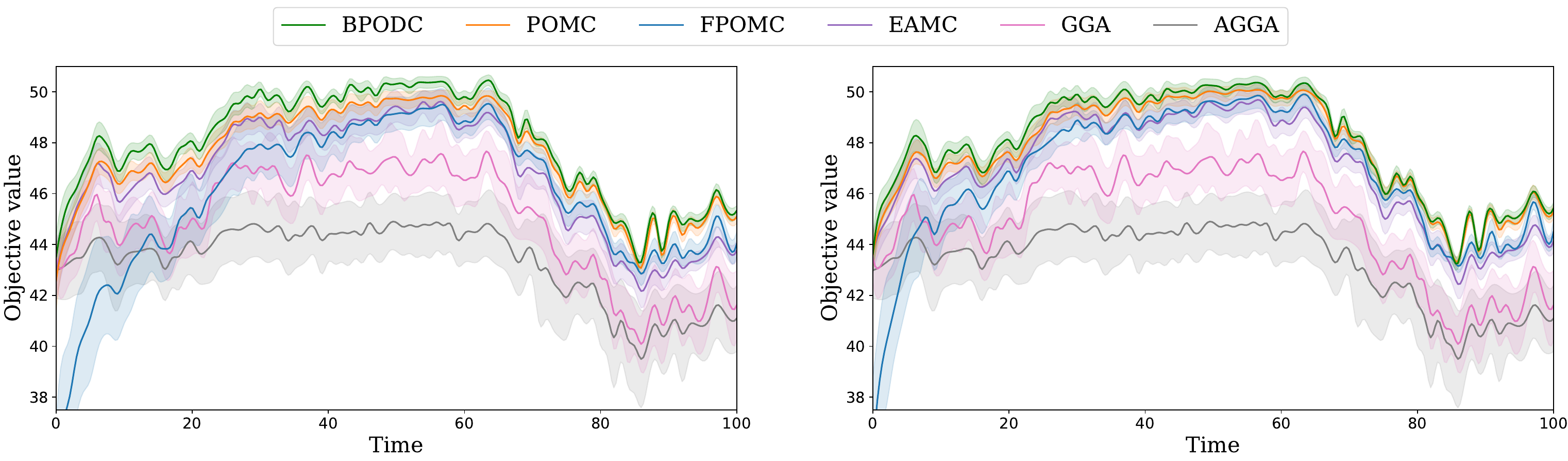}
\end{minipage}\\
\begin{minipage}[c]{0.49\linewidth}\centering
    \small(a) \textit{graph100}, $t=0.25T_G$
\end{minipage}
\begin{minipage}[c]{0.49\linewidth}\centering
    \small(b) \textit{graph100}, $t=0.5T_G$
\end{minipage}\\
\begin{minipage}[c]{1\linewidth}\centering
        \includegraphics[width=1\linewidth]{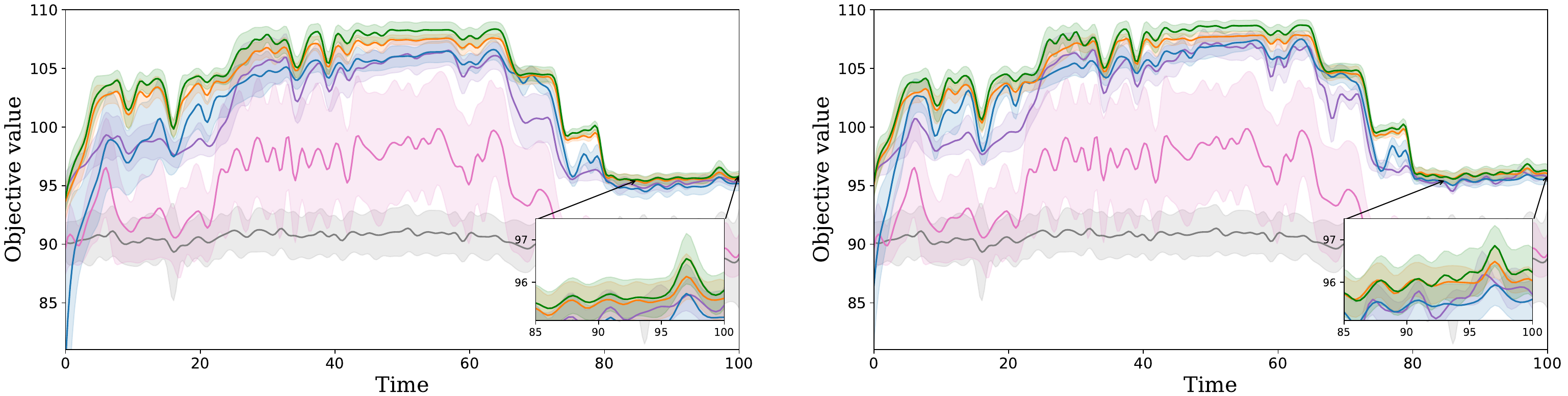}
\end{minipage}\\
\begin{minipage}[c]{0.49\linewidth}\centering
    \small(c) \textit{graph200}, $t=0.25T_G$
\end{minipage}
\begin{minipage}[c]{0.49\linewidth}\centering
    \small(d) \textit{graph200}, $t=0.5T_G$
\end{minipage}\\
\begin{minipage}[c]{1\linewidth}\centering
        \includegraphics[width=1\linewidth]{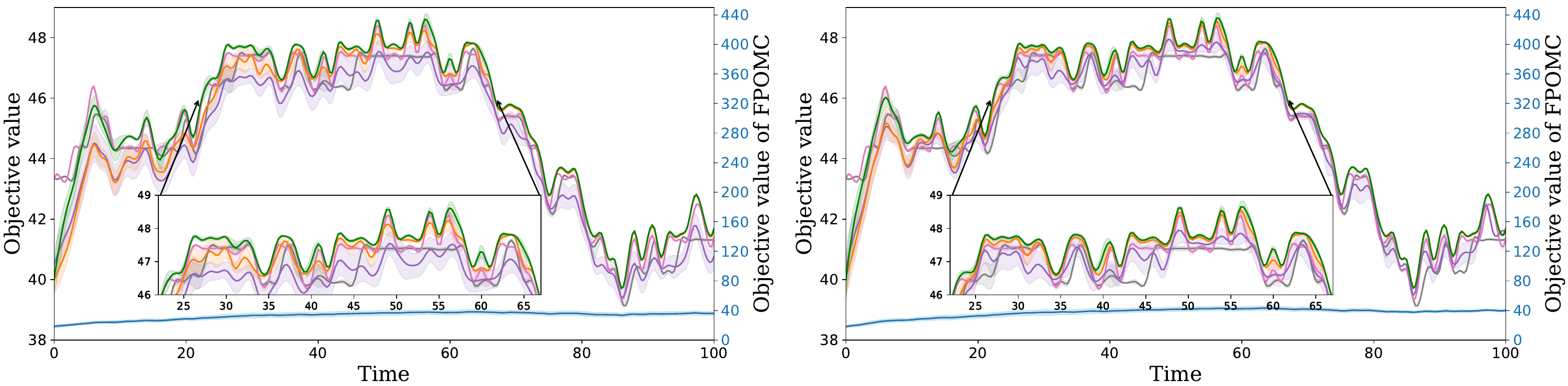}
\end{minipage}\\
\begin{minipage}[c]{0.49\linewidth}\centering
    \small(e) \textit{frb35-17-1}, $t_0=0.5T_G$ and $t=0.05T_G$
\end{minipage}
\begin{minipage}[c]{0.49\linewidth}\centering
    \small(f) \textit{frb35-17-1}, $t_0=0.5T_G$ and $t=0.1T_G$
\end{minipage}
\caption{The average value $\pm$ std for influence maximization under each budget change.}\label{fig-IM}\vspace{-1.5em}
\end{figure}

Figure~\ref{fig-IM}(a)-(d) clearly shows that BPODC, POMC and EAMC consistently outperform GGA and AGGA and are more stable, exhibiting a smaller std. FPOMC initially obtains a lower value during the first few changes, and then exceeds GGA and AGGA. These observations highlight the superior ability of EAs to leverage existing computational results while adapting to changes in budget constraints. Among all the algorithms, BPODC performs the best, regardless of the settings at $t=0.25T_G$ or $t=0.5T_G$. In Figure~\ref{fig-IM}(e)-(f), we set the running time allowed for EAs after each dynamic change to be much shorter, specifically $0.05T_G$ and $0.1T_G$, respectively. BPODC is the first to surpass GGA and AGGA and continues to consistently perform the best. Note that the curve of FPOMC is plotted on the secondary $y$-axis due to its poor performance.

\begin{figure}[t!]
\begin{minipage}[c]{1\linewidth}\centering
        \includegraphics[width=1\linewidth]{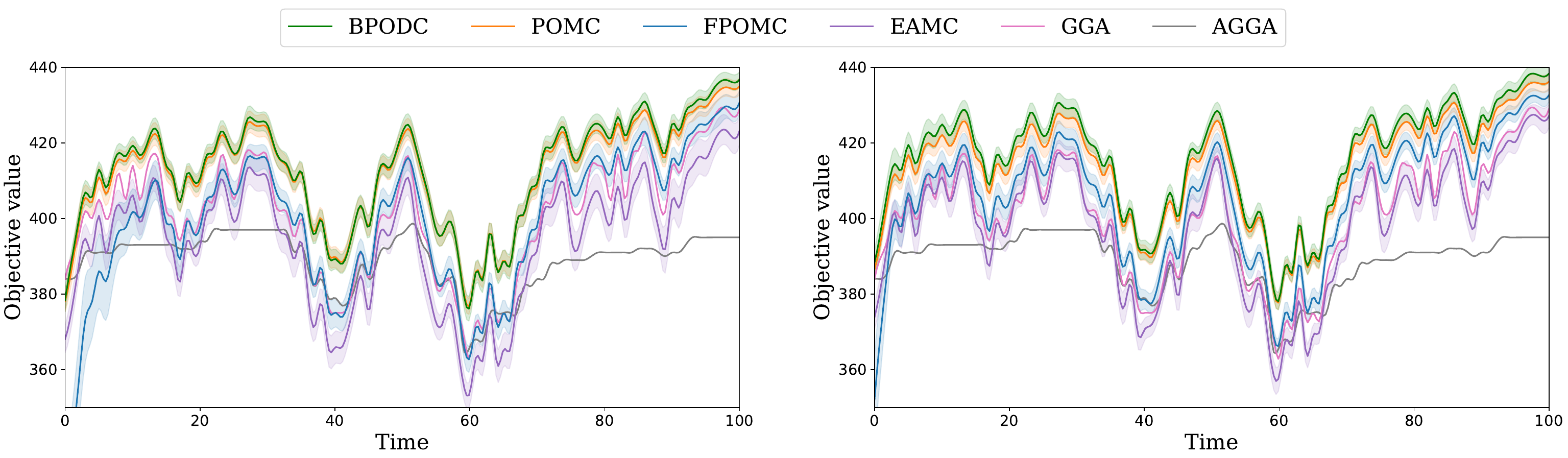}
\end{minipage}\\
\begin{minipage}[c]{0.49\linewidth}\centering
    \small(a) \textit{frb30-15-1}, $t=0.25T_G$
\end{minipage}
\begin{minipage}[c]{0.49\linewidth}\centering
    \small(b) \textit{frb30-15-1}, $t=0.5T_G$
\end{minipage}\\
\begin{minipage}[c]{1\linewidth}\centering
        \includegraphics[width=1\linewidth]{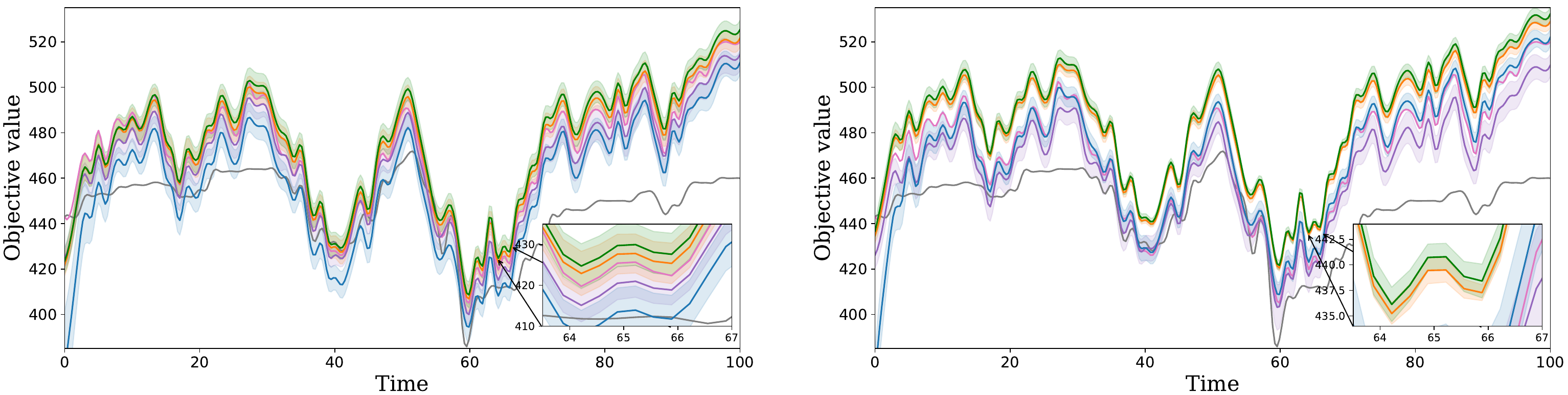}
\end{minipage}\\
\begin{minipage}[c]{0.49\linewidth}\centering
    \small(c) \textit{frb35-17-1}, $t=0.25T_G$
\end{minipage}
\begin{minipage}[c]{0.49\linewidth}\centering
    \small(d) \textit{frb35-17-1}, $t=0.5T_G$
\end{minipage}\\
\begin{minipage}[c]{1\linewidth}\centering
        \includegraphics[width=1\linewidth]{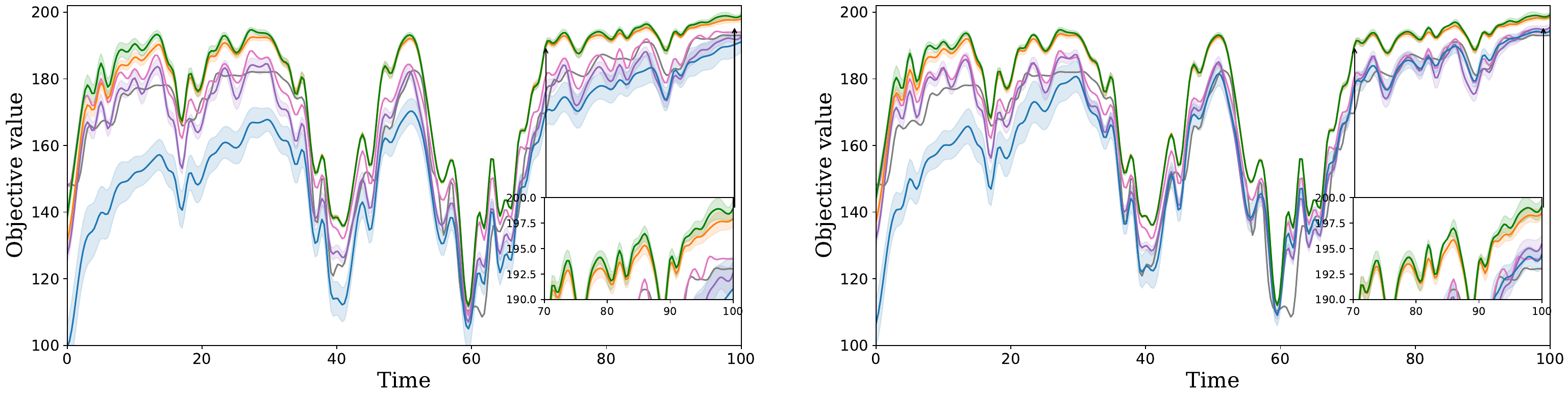}
\end{minipage}\\
\begin{minipage}[c]{0.49\linewidth}\centering
    \small(e) \textit{aves}, $t=0.25T_G$
\end{minipage}
\begin{minipage}[c]{0.49\linewidth}\centering
    \small(f) \textit{aves}, $t=0.5T_G$
\end{minipage}
\caption{The average value $\pm$ std for maximum coverage under each budget change.}\label{fig-MC}\vspace{-1.5em}
\end{figure}

\textbf{Maximum Coverage.} We use three real-world graph datasets for maximum coverage: \textit{frb30-15-1} (450 vertices, 17,827 edges) and \textit{frb35-17-1} (595 vertices, 27,856 edges), both used in~\cite{EAMC,aij22Roostapour}, and \textit{aves} (202 vertices, 4,658 edges)~\cite{wildbird}. For each vertex, we generate a set which contains the vertex itself and its adjacent vertices. We still use the linear cost constraint $c(X)=\sum_{v\in X}c_v$. The cost of each vertex is $c_v=1+\max\{d(v)-q,0\}$ as in~\cite{HarshawFWK19}, where $d(v)$ is the out-degree of vertex $v$ and $q$ is a constant (which is set to 6 in our experiment). For datasets \textit{frb30-15-1} and \textit{frb35-17-1}, the original budget is set to 500 and ranges from 300 to 700; for dataset \textit{aves}, it is set to 50 and ranges from 100 to 300. The sequence of 100 budget changes, randomly varying the current budget $B$ by $[-40,40]$, is shown in Figure~\ref{fig-budgets}(b) relative to the initial budget. The curves of average results over each time of change are plotted in Figure~\ref{fig-MC}.

\begin{figure}[t!]
\centering
\begin{minipage}[c]{0.38\linewidth}
\centering
        \includegraphics[width=0.9\linewidth]{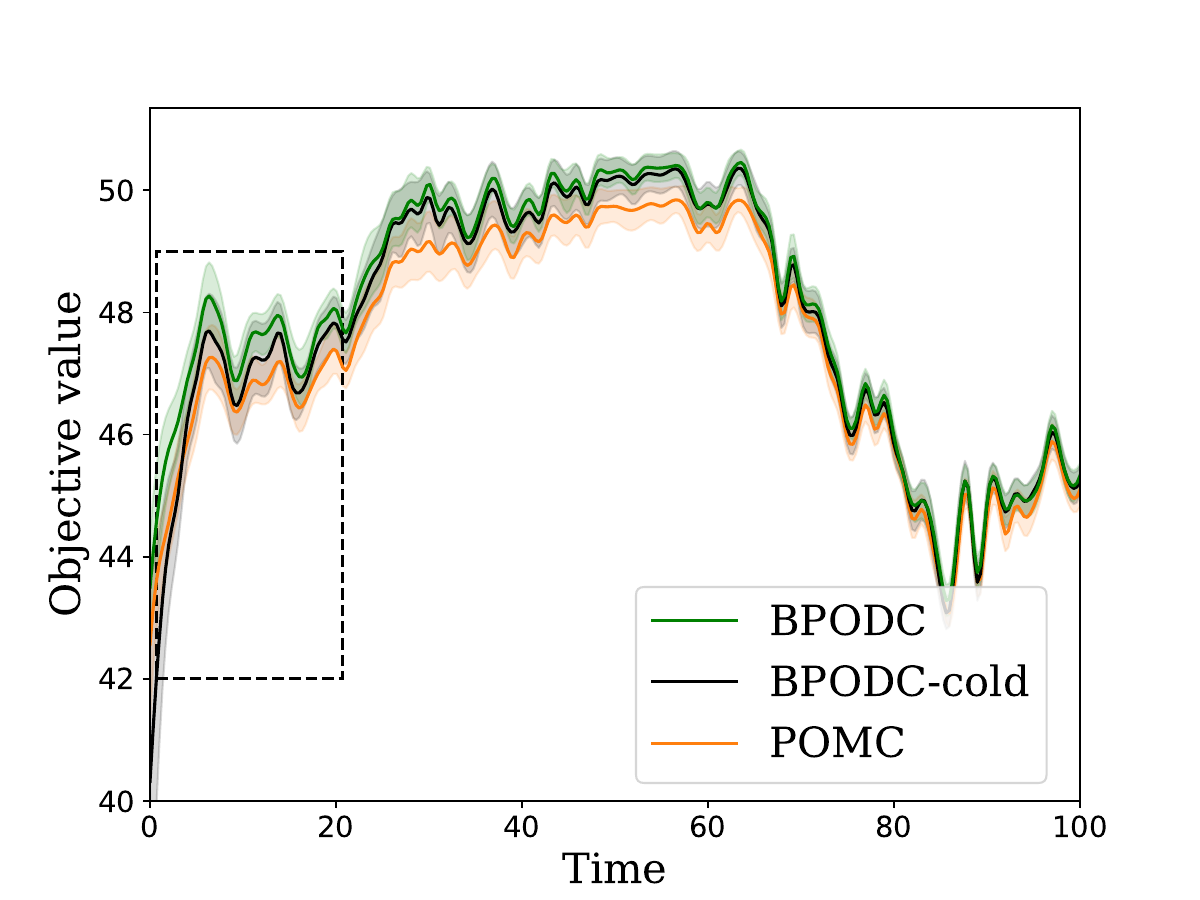}
\end{minipage}
\begin{minipage}[c]{0.4\linewidth}
\centering
        \includegraphics[width=0.9\linewidth]{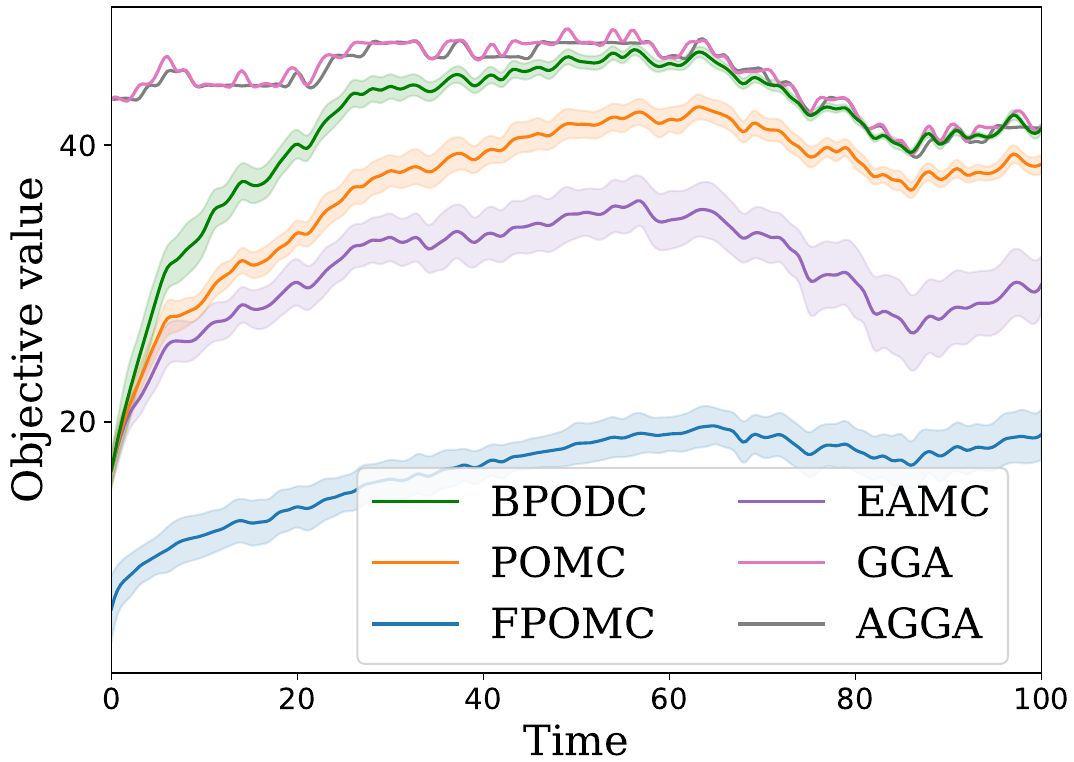}
\end{minipage}\\
\begin{minipage}[c]{0.4\linewidth}\centering
    \small(a) \textit{graph100}, $t=0.25T_G$
\end{minipage}
\begin{minipage}[c]{0.4\linewidth}\centering
    \small(b) \textit{frb35-17-1}, $t=1000$
\end{minipage}
\caption{Additional results for influence maximization under dynamic environment.}\label{fig-ablation}\vspace{-2em}
\end{figure}

Figure~\ref{fig-MC} shows that BPODC consistently performs the best across three datasets, although POMC sometimes matches its performance. However, the performance of other algorithms is less stable, for example, FPOMC performs moderately on \textit{frb30-15-1} (Figure~\ref{fig-MC}(a)-(b)) but poorly on \textit{aves} (Figure~\ref{fig-MC}(e)-(f)). EAMC underperforms GGA at both $t=0.25T_G$ and $t=0.5T_G$.

\textbf{Ablation Study.} As mentioned in Section~\ref{Algorithm}, the warm-up stage of BPODC initially applies uniform selection for a period, enhancing the effectiveness of subsequent biased selection. We test this by running BPODC-cold on the \textit{graph100} dataset with $t=0.25T_G$, using only biased selection without a warm-up. Figure~\ref{fig-ablation}(a) shows that BPODC-cold has a performance gap compared to BPODC during the first 20 changes, yet achieves similar performance afterward. Despite solely using biased selection, BPODC-cold still outperforms POMC. This implies the effectiveness of both biased selection and warm-up.

Next, we conduct experiments on the dataset \textit{frb35-17-1} for influence maximization in an extreme scenario. The number of objective evaluations for BPODC, POMC, EAMC, and FPOMC is set to 1000, including for the warm-up stage of BPODC. The results are plotted in Figure~\ref{fig-ablation}(b). As expected, the EAs do not surpass GGA, given that the available time resources are extremely limited—merely 0.006 times that of the greedy algorithm GGA. However, BPODC demonstrates a significantly faster adaptation speed compared to other EAs and maintains a performance level similar to GGA in the later stages (at 80th-100th changes).

\section{Conclusion}

This paper proposes BPODC, an enhancement of POMC that uses biased selection and warm-up strategies for subset selection with dynamic cost constraints. We prove that BPODC can maintain the best known $(\alpha_f/2)(1-e^{-\alpha_f})$-approximation guarantee when the budget changes. Experiments on influence maximization and maximum coverage show that BPODC adapts faster and more effectively to budget changes, utilizing a running time that is less than that of the greedy algorithm GGA. BPODC always achieves the best performance empirically. In the future, it is interesting to examine the performance of BPODC in more applications. 


\bibliographystyle{splncs04}
\bibliography{ref}
%
%
%
%






\end{document}